\newtheorem{theorem}{Theorem}
\newtheorem{lemma}{Lemma}
\newtheorem{corollary}{Corollary}
\newtheorem{problem}{Problem}
\newtheorem{remark}{Remark}
\newtheorem{assumption}{Assumption}
\newcommand\oprocendsymbol{\hbox{$\square$}}
\newcommand\oprocend{\relax\ifmmode\else\unskip\hfill\fi\oprocendsymbol}
\renewcommand{\paragraph}[1]{
 \vspace{1ex}{\noindent\textbf{#1.}\\} 
}
\newcommand{\red}[1]{{ \color{red} {{#1} }}}
\newcommand{\remove}[1]{{ }}
\title{On the Value of Energy Storage in Generation Cost Reduction}
\author{Yue Shen, 
Maxim Bichuch, and Enrique Mallada 
\thanks{
Y. Shen and E. Mallada are with the ECE Department of Electrical and Computer Engineering and M. Bichuch is with the Department of Applied Mathematics and Statistics, Johns Hopkins University, 3400 North Charles St., Baltimore, MD 21218, USA \texttt{\{yshen50,mbichuch,mallada\}@jhu.edu}.}
\thanks{
This work was supported by US DoE EERE award DE-EE0008006, NSF through grants  CNS 1544771, EPCN 1711188, AMPS 1736448, DMS 1736414, and CAREER 1752362,  and Johns Hopkins University Discovery Award.}
}
\begin{document}

\maketitle
\thispagestyle{empty}
\pagestyle{empty}

\maketitle
\begin{abstract}
This work seeks to quantify the benefits of using energy storage toward the reduction of the energy generation cost of a power system.  A two-fold optimization framework is provided where the first optimization problem seeks to find the optimal storage schedule that minimizes operational costs. Since the operational cost depends on the storage capacity, a second optimization problem is then formulated with the aim of finding the optimal storage capacity to be deployed. Although, in general, these problems are difficult to solve, we provide a lower bound on the cost savings for a  parametrized family of demand profiles.
The optimization framework is numerically illustrated using real-world demand data from ISO New England. Numerical results show that energy storage can reduce energy generation costs by at least 2.5\%.
\end{abstract}

\medskip


\section{Introduction}

The electric power grid is undergoing one of the most fundamental transformations since its inception~\cite{Liserre2010}.  Technological development of renewable energy sources~\cite{Omar2014} coupled with the need to reduce carbon emissions is transforming the generation mix~\cite{Solomon1704}.
Alongside, the electrification of transportation is driving a rapid growth on global electricity demand~ \cite{energyreport}. Among the many challenges that this paradigm shift introduces is the lack of synchronism between the times when renewable energy is available and the time when energy demand is required.
Energy storage is often seen as tentative solution towards addressing this challenge due to its ability to dispatch energy to shift energy availability across space, via deployment of distributed storage~\cite{Atzeni2013}, and across time~\cite{Zhao2015}.
Thus the additional flexibility that storage provides, together with the steady decrease on build and installation prices, has stimulated the deployment of several grid-scale storage systems, e.g., \cite{osti_1484345} and \cite{storagereport}.

However, despite the clear benefits that storage introduces, many questions regarding storage investment as well as efficient storage operation remained, to this day, unanswered. The key difficulty on this regard is that the cost of using storage is not a function of the instantaneous power, as it is the case for generators, where the cost can be mapped to the cost of the fuel consumed for generating electricity. Instead, the cost of using storage indirectly arises from the unit degradation that is experienced during charging/discharging cycles~\cite{Xu2018}.

This last observation is in contrast with the vast majority of existing works which formulate the cost of storage without detailed degradation model. Examples of this approach includes \cite{L2010, Q2011, Ozel2014},  which typically approach the storage operation by solving an optimization problem whose cost depend on fixed storage life-span. 
Only very recently, the question of how to optimally coordinate resources whose cost depend on instantaneous power with resources whose cost depend in energy trajectories, has started to be considered.%
In \cite{Mathias2019}, the problem of optimal coordination of limited-energy demand response and generation is considered. Similarly, \cite{Shi2019, Ju2018} consider the cost storage degradation and proposes an online algorithm with optimality guarantees.



In this paper, by perusing a path aligned to the works \cite{Shi2019,Ju2018}, we seek to quantify the economic benefit of using storage arbitrage as means to reduce energy production costs.
To this aim, we formulate a two-fold optimization framework aiming at: (a) finding the optimal storage operation that minimizes the total operational cost (including storage degradation cost); and (b) finding the optimal amount of storage that need to be deployed in the system to achieve the maximum benefit. Despite the complexity of such problems, we provide a sub-optimal policy for a paramterized family of demand trajectories that allow us quantify a lower bound on the potential  operational savings.


The rest of the paper is organized as follows. Our problem setting, including the energy system model, as well as a general problem formulation of the optimal \emph{operational problem} that seeks to optimally control storage to reduce operations costs, 
and the \emph{planning problem}, that seeks to find the optimal storage deployment, are presented in section \ref{sec:preliminaries}. In Section~\ref{sec:approach}, we describe the proposed approach to quantify the potential benefits by reformulating the above-mentioned general problems. The analytical solutions to our reformulations are presented in Section~\ref{sec:operational} and  Section~\ref{sec:planning}. Finally, preliminary numerical analysis, and conclusions, are provided in sections \ref{sec:numerical} and \ref{sec:Conclusion}, respectively.

\section{Problem Formulation}
\label{sec:preliminaries}
In this section, we describe the energy system that we seek to study and formalize the two problems that we consider  in this paper.

\subsection{System Model}
We use ${d}(t)\ge0$ to denote the net uncontrollable power demand --possibly including renewable-- and $p(t)$ the total aggregate power generation of a system operator at time $t\in [t_0,t_f]$. 
The total energy stored in the system at time $t$ is denoted by $e(t)$.
The energy  $e(t)$ evolves according to
\begin{equation}\label{eq:storage}
    \dot e(t) = u(t),
\end{equation}
where $u$ is the rate of change of stored energy.  We adopt the convention that $u(t) > 0$ implies charging, whereas $u(t) < 0$ means discharging. The total storage capacity is denoted by $C$ and the maximum charging/discharging rate by $r$, i.e., 
\begin{align}
    \label{eq:storage-constraints}
    -r\leq u(t)\leq r=\frac{C}{\epsilon},\qquad\text{and}\qquad 0\leq e(t)\leq C.
\end{align}
The ratio $\epsilon=\frac{C}{r}$ referred as technology parameter and it is aimed at representing different technological features of the storage. We further let $T_\text{ls}$ denote the total lifespan of the energy system.

Finally, the net load of the energy system ${d}(t) + u(t)$ is supplied by external power supply $p(t)$, i.e., 
\begin{equation}\label{eq:supp-dem-balance}
    {p}(t) = {d}(t) + u(t).
\end{equation} 
For simplicity, we assume in this paper that the charging/discharging process is lossless. A more realistic model that relaxes this assumption is subject of current research.

\subsection{Cost Model}
We are interested in quantifying the benefits of using energy storage as a way to reduce the overall  cost that the system incurs in meeting the demand $d(t)$. 

\subsubsection{Generation Cost}
We model the aggregate generation cost using  quadratic cost function $L_{g}: \mathbb{R}\rightarrow\mathbb{R}$ for power supply, i.e.,
\begin{align}\label{eq:gen-cost}
    &L_{g}( p) = \frac{a}{2} {p}^2+b {p},
\end{align}
where $a$ and $b$ are positive cost coefficients. Equation \eqref{eq:gen-cost} represents either the generation cost derived from fuel consumption (see, e.g.,~\cite{Adefarati2013}) or the integral of the inverse aggregate supply function derived empirically from the ISO (see, e.g.,~\cite{you2019role,tang2016model}).

\subsubsection{Storage Cost}
Unlike generation cost that originates from cost of producing energy, storage cost is a result of battery degradation that occurs with each discharge cycle. 
To compute this cost we first let functional $\mathcal D_i:C^{\infty}_{[t_0,t_f]}\rightarrow\mathbb{R}_{\ge0}$ denote the $i$th Depth of Discharge (DoD), $i\in\mathbb{N^+}$, of the stored energy trajectory $e\in C^{\infty}_{[t_0,t_f]}$. The DoD is the capacity that has been discharged from the fully charged battery. And it is often normalized as a fraction of the total capacity. Given the normalized DoD $y\in(0,1]$, the storage degradation is given by the cycle depth stress function $\Phi(y)$, with  $\Phi: \mathbb{R}_{\ge0}\rightarrow\mathbb{R}_{\ge0}$. This function quantifies the battery  loss of life due to a cycle of depth $y=\frac{D}{C}$ for a DoD $D$ and capacity $C$.

Thus the total storage cost due to the energy trajectory $e$ is represented  by the cost functional $\mathcal{L}_{s}:C^{\infty}_{[t_0,t_f]}\rightarrow\mathbb{R}_{\ge0}$ given by
\begin{align}
    &\mathcal{L}_{s}(e) = \sum_{i=1}^\infty \Phi\left(\frac{\mathcal D_i(e)}{C}\right) C \rho,\label{eqn:lse}
\end{align}
where $\rho$ represents the one-time unit building cost. 

\subsection{The Value of Storage}
As mentioned before, we seek to quantify the benefit that storage brings to an ISO. We will investigate this benefit in two settings. We first consider the \emph{operational problem} of how to optimally operate the available storage to minimize the system cost. Because such problem implicitly depends on the amount of storage available, we then move towards the \emph{planning problem} of finding the optimal amount of storage that one needs to deploy.


\medskip
\subsubsection{Operational Problem}
The first difficulty on seeking to optimally use storage arises from the difference in the argument between \eqref{eq:gen-cost} and \eqref{eqn:lse}.
While \eqref{eq:gen-cost} is a function of the instantaneous power being generated, \eqref{eqn:lse} depends on the entire energy trajectory, which implicitly depends on $u(t)$. 
We overcome this issue by expressing \eqref{eq:gen-cost} in terms of the total energy production cost over the interval $[t_0,t_f]$. This is given by the generation cost functional $\mathcal{L}_{g}:C^{\infty}_{[t_0,t_f]}\rightarrow\mathbb{R}_{\ge0}$, i.e.,
\begin{align}
    \mathcal{L}_{g}({p},{t_0},{t_f})&\!=\!\!\int_{t_0}^{t_f}\!\!\!L_g(p(t))dt 
    \!=\! \int_{t_0}^{t_f}\frac{a}{2} {p}(t)^2\!+\!b {p}(t)dt.\label{eqn:lg}
\end{align}

Using \eqref{eqn:lg} and \eqref{eqn:lse} one can formally define the optimal storage control problem as
\begin{subequations}\label{eqn:storage-control}
\begin{align}
    J(C,[t_0,t_f])=
    \min_{u} \quad&\mathcal{L}_{g}({p},{t_0},{t_f}) +\mathcal{L}_{s}(e,{t_0},{t_f}),\hfill\\
    \mathrm{s.t.}   \quad&\;\;\;
    \eqref{eq:storage},\eqref{eq:storage-constraints},\eqref{eq:supp-dem-balance}.\\[-2ex]\nonumber
\end{align}
\end{subequations}
The optimal control problem \eqref{eqn:storage-control} has the advantage of combining both, the cost of using the energy storage, together with the cost generation power in a common setting. 

However, there are two main difficulties in using \eqref{eqn:storage-control} towards quantifying the benefits of storage. Firstly, the solution will depend on the boundary conditions at $t_0$ and $t_f$, which can make the interpretation of the benefits hard to asses. Secondly, the cost functional \eqref{eqn:lse} is hard to evaluate and highly dependent on the demand $d$. 

We overcome the dependence on the boundary conditions by consider the following average operational cost problem. The latter issue will be addressed in the next section.
\begin{problem}(Average Operational Problem)\label{prob:average-op}
\begin{align}\label{eqn:operationalprob1}
    J(C)=
    \lim_{\substack{t_0\rightarrow-\infty\\t_f\rightarrow\infty}} \frac{1}{t_f \!-\ t_0}
    J(C,[t_0,t_f])
\end{align}
\end{problem}
Given the tuple $(C)$, Problem \ref{prob:average-op} quantifies the average operational cost $J(C)$. This value can be used to measure the benefits of including storage by looking at difference in cost between $J(0)$ and $J(C)$, i.e., $$B_f(C)= J(0)-J(C).$$

\medskip
\subsubsection{Planning Problem}
We now turn to the formulation of the planning problem.  More precisely, we seek to capture the effect of $C$ on the operational cost and find the optimal storage capacity. However, because certain values of $C$ may not be feasible, we will implicitly constraint them including a bound on the life-span $T_{ls}\leq T_{ls,\max}$. This leads to the following optimization problem.





\begin{problem}(Planning Problem)\label{prob:planning}
\begin{subequations}\label{eqn:planningproblem1}
\begin{align}
    \min_{C} \quad &J(C),\label{eqn:ppa}\\    
    \text{s.t.} \quad
                     &T^{u^*}_{ls}(C)\le T_{ls,\max}.\label{eqn:ppf}
\end{align}
where $T^u_{ls}(C)$ is the life span of a storage of capacity $C$ under the control $u$, and $u^*$ is the optimal policy derived in Problem \ref{prob:average-op}.
\end{subequations}
\end{problem}
We finalize by noting that the relationship between $T_{ls}$ and $(C)$ is not straightforward, and only appears in cases were the optimal capacity in \eqref{eqn:planningproblem1} leads to an optimal policy with very low storage degradation per unit of time that requires unrealistic life-spans.

\section{Solution Approach}
\label{sec:approach}
As mentioned before, problems \ref{prob:average-op} and \ref{prob:planning} are either intractable due to the complexity of evaluating $\mathcal L_s$ or uninformative of the overall benefits of using storage due to the dependence of \eqref{eqn:storage-control} on boundary conditions. In order to overcome this limitation, we relax some of the constraints and seek to find an upper bound on a family of instances of such problems. This allows us to characterize the dependence of $J(C)$ on the frequency and amplitude of intra-day demand cycles and, in this way, get an upper bound on the benefits that storage can introduce. 

We focus on demand functions that capture the fluctuating demand behavior. We start by assuming a realistic case where power demand $d$ is perturbed around certain baselines $d_0\ge0$. We further assume the perturbations around $d_0$ is a periodic sinusoidal deviation with amplitude $d_1 \le d_0$, i.e.
\begin{align}
     &{d}(t) = d_0 + d_1\sin(\omega_0t)\label{eqn:d}.
\end{align}

\subsection{Operational Problem Reformulation}
As mentioned above, instead of considering explicitly the storage functional $\mathcal L_s$ in \eqref{eqn:lse} that depends on the whole stored energy trajectory, we use
quadratic storage cost functional $\mathcal{L}_{q}(\cdot;
\gamma):C^{\infty}_{[t_0,t_f]}\rightarrow\mathbb{R}_{\ge0}$
\begin{equation}
    \mathcal L_q (e,{t_0},{t_f};\gamma) = \int_{t_0}^{t_f}  \frac{\gamma}{2}(e(t)-e_0)^2 dt
\end{equation}
that penalizes the instantaneous stored energy deviation from a reference energy $e_0$. 
The penalty parameter $\gamma > 0$ not only limits the amount of energy being used, thus limiting degradation, but it also implicitly constraints the control effort $u$. Thus we remove constraints \eqref{eq:storage-constraints} and solve instead the following auxiliary optimal control problem,
\begin{subequations}\label{eq:12}
\begin{align}
  \tilde J(\gamma,[t_0,t_f])= \min_{u}\,\, &\mathcal{L}_g(p,{t_0},{t_f}) + \mathcal{L}_q(e,{t_0},{t_f};\gamma)\label{eqn:aa},\\
    \mbox{s.t.} \,\,& p =  d + u,\label{eqn:ad}\\
                  &\dot e=u,\label{eqn:ae}
\end{align}
\end{subequations}
which after taking  $t_0\rightarrow-\infty$ and $t_f\rightarrow\infty$ leads to the following auxiliary operational problem.
\begin{problem}\label{prob:infinite}
(Auxiliary Average Operational Problem).
\begin{align}\label{eqn:appxproblem}
  \hat J(\gamma)=\lim_{\substack{t_0\rightarrow-\infty\\t_f\rightarrow\infty}} \frac{1}{t_f - t_0}  &\hat J(\gamma,[t_0,t_f]).
\end{align}
\quad \,where $ d$ is defined by \eqref{eqn:d}.
\end{problem}
\begin{remark}
For problem \ref{prob:infinite},  our approach is to firstly solve the finite time case and then extend it to infinite time horizon by taking limit of the starting and final time. By doing this, even though the solution to \eqref{eq:12} is optimal for any finite time interval, it raises questions about the sufficiency and uniqueness in infinite horizon setting. We will show that our constructive solution is optimal over the set of all bounded solutions, which is the set of solution elevates practical interest. We remark, however, that the optimal solution is no longer unique.
\end{remark}

Although the value $\hat J(\gamma)$ does not have a specific economic meaning, we will show that the optimal solution of \emph{Problem 3} is pure sinusoidal around certain baselines with frequency $\omega_0$. The amplitudes of $u(t)$ and $e(t)$ are functions of $\gamma$, i.e.,
\begin{subequations}
\label{eqn:tempsolution}
\begin{align}
    u(t,\gamma) &= -u_1(\gamma)\sin(\omega_0t),\\
    e(t,\gamma) &= e_0+e_1(\gamma) \cos(\omega_0t).
\end{align}
\end{subequations}

Note that for control and storage trajectories \eqref{eqn:tempsolution}, every cycle is identical. Thus, for this particular choice of demand and class of problems it is possible to express the DoD for every cycle as:
\begin{equation}
 D(\gamma) = 2 e_1(\gamma). \label{eqn:mde}
\end{equation}
as well as compute explicitly, the average generation cost
\begin{align}
    J_g(\gamma) = \frac{a}{4}(d_1-u_1(\gamma))^2+\frac{a}{2}d_0^2+b d_0,\label{eqn:jlg}
\end{align}
 the average storage cost of the original storage model $\mathcal{L}_s(e)$ 
\begin{align}
    J_{s}(\gamma, C) &= \Phi(2 e_1(\gamma)/C) C \rho \frac{\omega_0}{2\pi},\label{eqn:jls}
\end{align}
%
and the total operational cost of the control policy \eqref{eqn:tempsolution}:
\begin{align}
    J(\gamma,C)&= J_g(\gamma)+J_s(\gamma,C).\label{eqn:oper-cost-gamma}  
\end{align}
As a result, since \eqref{eqn:tempsolution} is a feasible solution to Problem \ref{prob:average-op}, it is an upper bound on the total operational cost.

\subsection{Planning Problem Reformulation}
Once we have explicit expressions for the long term average generation and storage cost as functions of $\gamma$ and $C$, we can further reformulate the planning problem in the following form. Note that we will optimize it with respect to $\gamma$ and $C$ and the solution of this reformulated planning problem gives optimal storage capacity and penalty parameter that achieves an upper bound on the optimal planning problem Problem \ref{prob:planning}.

\begin{problem}(Reformulated Planning Problem)
\begin{subequations}\label{eqn:planningproblem_r}
\begin{align}
    \min_{\gamma,C} \quad &J(\gamma,C),\label{eqn:ppa2}\\    
    \text{s.t.} \quad
                     &  D(\gamma)\leq C,\label{eqn:rpc}\\
                     & u_1(\gamma)\leq r=\frac{C}{\epsilon} ,\label{eqn:rpe}\\
                     &T^{u(t,\gamma)}_{ls}(C)\le T_{ls,\max}.\label{eqn:ppf2}
\end{align}
\end{subequations}
where ${D}(\gamma)$,  $J(\gamma,C)$, and $(u(t,\gamma),u_1(\gamma))$ are defined in \eqref{eqn:mde}, \eqref{eqn:oper-cost-gamma}, and \eqref{eqn:tempsolution}, respectively.
\end{problem}

\section{Operational Problem}
\label{sec:operational}
In this section we provide an analytical solution to the auxiliary problem as well as 
\emph{Theorem 1} unveils the analytically optimal solution of the auxiliary average operational problem. 

\begin{theorem}
\label{the:infinite}
Given the operational problem \eqref{eqn:appxproblem}, a optimal storage control over all bounded $e(t)$ in an infinite time horizon follows equation \eqref{eqn:tempsolution} with 
\begin{subequations}
\begin{align}
    u_1(\gamma) &= d_1\frac{\omega_0^2}{\theta^2+\omega_0^2},\\
    e_1(\gamma) &= d_1\frac{\omega_0}{\theta^2+\omega_0^2},\label{eq:e1}
\end{align}
\end{subequations}
where $\theta = \sqrt{\frac{\gamma}{a}}$.
\end{theorem}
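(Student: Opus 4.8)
The plan is to treat \eqref{eq:12} as a calculus-of-variations problem in the single scalar trajectory $e$. Substituting $p = d + u = d + \dot e$ from \eqref{eqn:ad}--\eqref{eqn:ae}, the integrand becomes $F(e,\dot e,t) = \frac{a}{2}(d+\dot e)^2 + b(d+\dot e) + \frac{\gamma}{2}(e-e_0)^2$, so the objective depends only on $e$ and $\dot e$. First I would write down the Euler--Lagrange equation $\frac{d}{dt}\partial_{\dot e}F = \partial_e F$, which after using $\theta^2 = \gamma/a$ reduces to the linear second-order ODE
\begin{equation*}
\ddot e - \theta^2 (e - e_0) = -\dot d = -d_1\omega_0\cos(\omega_0 t).
\end{equation*}

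Next I would solve this ODE. Its general solution is the sum of the homogeneous part $C_1 e^{\theta t} + C_2 e^{-\theta t}$ and a particular solution; trying $e_0 + A\cos(\omega_0 t)$ yields $A = \frac{d_1\omega_0}{\theta^2+\omega_0^2}$, which reproduces $e_1(\gamma)$ in \eqref{eq:e1} and, through $u = \dot e$, the claimed $u_1(\gamma)$. The key structural observation is that the two homogeneous modes $e^{\pm\theta t}$ are unbounded as $t\to\pm\infty$; hence the \emph{only} bounded solution of the Euler--Lagrange equation over the infinite horizon is the purely sinusoidal trajectory \eqref{eqn:tempsolution}, forcing $C_1=C_2=0$.

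The central step is to upgrade this stationarity into genuine optimality over the class of all bounded admissible trajectories, and here the quadratic (convex) structure is what I would exploit. Writing an arbitrary bounded competitor as $e = e^* + \eta$ with $\eta$ bounded, I would expand the cost over $[t_0,t_f]$ and separate it into the cost of $e^*$, a term linear in $\eta$, and the nonnegative quadratic remainder $\int_{t_0}^{t_f}\big(\tfrac{a}{2}\dot\eta^2 + \tfrac{\gamma}{2}\eta^2\big)\,dt \ge 0$. Integrating the linear term by parts and invoking the Euler--Lagrange identity $a\dot p^* = \gamma(e^*-e_0)$ cancels the interior integral and leaves only the boundary contribution $\big[(a p^* + b)\eta\big]_{t_0}^{t_f}$. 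Since $p^*$ and $\eta$ are bounded, this boundary term stays bounded while the horizon grows, so after dividing by $t_f-t_0$ and letting $t_0\to-\infty$, $t_f\to\infty$ it vanishes; the remaining averaged quadratic term is nonnegative, proving that $e^*$ attains the minimum average cost.

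The main obstacle I anticipate is precisely this infinite-horizon optimality argument rather than the ODE solve: on a finite interval the Euler--Lagrange equation is only a necessary first-order condition, and one must argue carefully that the boundary terms remain controlled and that the averaging limit is well defined. This same computation also explains the non-uniqueness noted in the remark: the averaged penalty $\lim \frac{1}{t_f-t_0}\int_{t_0}^{t_f}\big(\tfrac{a}{2}\dot\eta^2+\tfrac{\gamma}{2}\eta^2\big)\,dt$ is zero for any perturbation $\eta$ that decays in time average (a transient), so distinct bounded trajectories sharing the same asymptotic sinusoid are all optimal, while $e^*$ is the unique optimizer within the periodic class.
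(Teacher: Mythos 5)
Your first-order analysis coincides with the paper's: the paper introduces a costate $\lambda$ for $\dot e=u$ and eliminates it to obtain $a(\dot d_s+\dot u)=\gamma e_s$, which is exactly your ODE $\ddot e-\theta^2(e-e_0)=-\dot d$, and your observation that boundedness forces $C_1=C_2=0$ is the same mechanism by which the paper's $\sinh/\cosh$ terms disappear after the limit $t_0\to-\infty$, $t_f\to\infty$. Where you genuinely diverge is in how optimality is certified. The paper solves the finite-horizon problem completely: it fixes the integration constants via the transversality conditions $\lambda(t_0)=\lambda(t_f)=0$, verifies the Jacobi accessory equation (no conjugate points, $R=a/2>0$) to get a strict minimum on each finite interval, and then passes to the limit, invoking an infinite-horizon transversality result (the vanishing of $\frac{1}{T_\Delta}\lambda(t)(\hat e(t)-e_s(t))$ for sub-linearly growing competitors, with a citation) to conclude. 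You instead exploit the convex quadratic structure directly: expanding the cost of a bounded competitor $e^*+\eta$, integrating the linear term by parts against the Euler--Lagrange identity $a\dot p^*=\gamma(e^*-e_0)$ so that only the boundary contribution $\bigl[(ap^*+b)\eta\bigr]_{t_0}^{t_f}$ survives, noting it is $O(1)$ and hence killed by the $1/(t_f-t_0)$ normalization, and keeping the nonnegative quadratic remainder. Your route is more self-contained --- no Jacobi condition, no finite-horizon transversality bookkeeping, no external citation --- and it yields \emph{global} optimality over all bounded competitors in one stroke (your explanation of non-uniqueness via transient perturbations also matches the paper's remark). What the paper's longer route buys is the explicit finite-horizon optimizer \eqref{eqn:finsolution}, which it reuses downstream when computing the average generation and storage costs. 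The only cosmetic caveat in your writeup is to note that a competitor whose quadratic remainder has infinite time-average trivially satisfies the inequality, so the comparison is complete; this does not affect correctness.
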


\begin{proof}
By substituting \eqref{eqn:d} and \eqref{eqn:ad} into \eqref{eqn:aa} and defining $d_{s}(t):=d_1\sin(\omega_0 t)$ as the sinusoidal part of demand \eqref{eqn:d}, we can explicitly express the operational cost as
\begin{align}
    &\mathcal{L}({p},{e},{t_0},{t_f}; \gamma)=\mathcal{L}_g( p,{t_0},{t_f}) + \mathcal{L}_q( e,{t_0},{t_f};\gamma),\nonumber\\
    &\quad= \int_{t_0}^{t_f}\Big(\frac{a}{2} d_0^2 +a d_0( d_s (t)+u(t)) + \frac{a}{2} ( d_s(t)+u(t))^2\nonumber\\
    &\quad\quad+ b d_0 + b ( d_s(t)+u(t))+\frac{\gamma}{2} (e(t)-e_0)^2 \Big)dt.\label{eqn:operaational cost3}
\end{align}
Since the optimal control does not depend on the constant terms, we can drop the terms $bd_0$ and $\frac{a}{2}d_0^2$, and further let $p_s (t) : = d_s(t) + u(t)$ and $e_s(t): = e(t)-e_0$ to get
\begin{align}
    \mathcal{L}(p,e,{t_0},{t_f};\gamma)\! &=\!\!\int_{t_0}^{t_f}\!\frac{a}{2}  p_s(t)^2\!+\!\beta p_s(t)\!+\!\frac{\gamma}{2} e_s(t)^2dt ,\label{eqn:operational cost5}
\end{align}
where $\beta: =a d_0+b $.

By introducing a Lagrange multiplier $\lambda$ for the storage dynamics \eqref{eqn:ae}, the Euler-Lagrange equation 
demonstrates the optimal conditions for this problem:
\begin{subequations}
\begin{align}
    \label{eqn:p1tc1}
    a\left(d_s\left(t\right)+u\left(t\right)\right)+\beta+\lambda(t)=0,\\
    \label{eqn:p1tc2}
    \dot{\lambda}\left(t\right)=-\gamma e_s\left(t\right).
\end{align}
\end{subequations}

Combining \eqref{eqn:p1tc1} and \eqref{eqn:p1tc2} eliminates $\lambda$ and yields a second-order differential equation for which the optimal storage control needs to satisfy:
\begin{align}
    \label{eqn:p1tc}
    a\left(\dot{d}_s\left(t\right)+\dot{u}\left(t\right)\right)=\gamma e_s(t).
\end{align}

At the starting time $t_0$, let $d^s_{t_0}:=d_s(t_0)$, $e^s_{t_0}:=e_s(t_0)$ and $\dot{e}^s_{t_0}:= \dot e_s(t_0)$. Then the closed-form solution to \eqref{eqn:p1tc} is 
\begin{subequations}
\label{eqn:gsolution}
\begin{align}
    \label{eqn:gsolution1}
    e_s\left(t\right)&=\!\!\left(\!\!\!-d_1\!\frac{\theta}{\theta^2+\omega_0^2}\!\sin\!\left(\omega_0t_0\right)\!+\!\frac{1}{\theta}\!\left({\dot{e}^s}_{t_0}\!\!+\! d^s_{t_0}\right)\!\!\right)\!\sinh{\left(\!\theta\!\left(t\!-\!t_0\right)\!\right)}\nonumber\\
                   &+\left(-d_1\frac{\omega_0}{\theta^2+\omega_0^2}\cos\left(\omega_0t_0\right)+e^s_{t_0}\right)\cosh{\left(\theta\left(t-t_0\right)\right)}\nonumber\\
                   &+d_1\frac{\omega_0}{\theta^2+\omega_0^2}\cos (\omega_0t),\\
    \label{eqn:gsolution2}
    u\left(t\right)&=\!\!\left(\!\!\!-d_1\frac{\theta^2}{\theta^2+\omega_0^2}\!\sin\left(\omega_0t_0\right)\!+\!\left({\dot{e}^s}_{t_0}\!+\! d^s_{t_0}\right)\!\!\right)\!\cosh{\!\left(\!\theta\!\left(t\!-\!t_0\right)\!\right)\!}\nonumber\\
         &+\theta\left(-d_1\frac{\omega_0}{\theta^2+\omega_0^2}\cos\left(\omega_0t_0\right)+e^s_{t_0}\right)\sinh{\left(\theta\left(t-t_0\right)\right)}\nonumber\\
                   &-d_1\frac{\omega_0^2}{\theta^2+\omega_0^2}\sin (\omega_0t),
\end{align}
\end{subequations}
where $\theta = \sqrt{\frac{\gamma}{a}}$, $\dot{e}^{s}_{t_0}$ and ${e}^{s}_{t_0}$ are unknowns.

We further check the sufficient condition for optimally by  solving the Jacobi Accessory Equation \eqref{eqn:jaa} to \eqref{eqn:jac} with conditions in \eqref{eqn:jad}\cite[p.~52]{Liberzon2011}
\begin{subequations}
\begin{align}
    &L(u,e_s) = \frac{a}{2}(d_s+u)^2+\beta (d_s+u)+\frac{\gamma}{2} (e_s)^2;\\
    &R \!=\! \frac{1}{2}\left( \frac{\partial L}{\partial^2u}{}({u},{e_s})\right)\Bigg\vert_{\substack{d_s = d_s(t)\\ e_s=e_s(t)\\u=u(t)}}=\frac{1}{2}a;\label{eqn:jaa}\\
    &Q \!= \! \frac{1}{2}\!\left(\!\frac{\partial L}{\partial^2{e_s}}({u},{e_s})\!-\! \frac{d}{dt}\frac{\partial L}{\partial u\partial{e_s}}({u},{e_s})\!\right)\Bigg\vert_{\substack{d_s = d_s(t)\\ e_s=e_s(t)\\u=u(t)}}\nonumber\\
    &\quad=\frac{1}{2}\gamma;\label{eqn:jab}\\
    &Qv(t) = \frac{d}{dt}(R\dot{v}(t));\label{eqn:jac}\\
    &v(t_0) = 0;\quad v(t) \not\equiv 0.\label{eqn:jad}
\end{align}
\end{subequations}

The solution is the given by
\begin{align}
    &v(t) = c_1 e ^{\sqrt{\frac{\gamma}{a}}(t-t_0)}-c_1.
\end{align}
for some constant $c_1$. 
Note that the solution $v(t)\neq 0$ for all $t>t_0$ leads to non-existence of conjugate points within the time interval $[t_0,t_f]$. This, together with the fact that $R>0$, shows that the extremal is a strict minimum.

Now, we are able to derive the closed-form expressions of $e_s(t)$ and $u_s(t)$ with two remaining unknowns $e^s_{t_0}$ and $\dot{e}^s_{t_0}$.
Since $e_s(t_0)$ and $e_s(t_f)$ are constraint-free, by the transversality condition\cite[p.~83]{Liberzon2011} $\lambda(t_0)=\lambda(t_f)=0$, it follows from \eqref{eqn:p1tc1} that
\begin{subequations}
\label{eqn:trans}
\begin{align}
&\dot{e}^{s}_{t_0} = u(t_0) = -\frac{\beta}{\alpha}-d_{t_0}^s,\label{eqn:trans1}\\
&u(t_f) = -\frac{\beta}{\alpha}-d(t_f). \label{eqn:trans2}
\end{align}
\end{subequations}

Note that \eqref{eqn:trans1} immediately gives rise to the $\dot{e}^{s}_{t_0}$. Then evaluating \eqref{eqn:gsolution2} at $t_f$ and plugging in the expression (30b) yields an algebraic equation in $e^s_{t_0}$

Substituting both unknowns in \eqref{eqn:gsolution} gives, 

\begin{subequations}
\label{eqn:finsolution}
\begin{align}
    e_s\left(t\right)&=\!\sinh{\left(\theta\!\left(t-t_0\right)\!\right)}\left(\!\!-d_1\!\frac{\theta}{\theta^2+\omega_0^2}\!\sin\!\left(\omega_0t_0\right)\!-\!\frac{1}{\theta}\!\frac{\beta}{a}\right)\nonumber\\
    &+\frac{\cosh{\left(\theta\left(t-t_0\right)\right)}}{ \sinh(\theta T_{\Delta})}\left(\frac{1}{\theta}\frac{\beta}{a}(\cosh(\theta T_{\Delta})-1)\right.\nonumber\\
    &\left.+d_1\frac{\theta}{\theta^2+\omega_0^2}(\sin(\omega_0 t_0)\cosh(\theta T_{\Delta})-\sin(\omega_0 t_f))\right)\nonumber\\
    &+d_1\frac{\omega_0}{\theta^2+\omega_0^2}\cos (\omega_0t);\label{eqn:finsolution1}\\
    u\left(t\right)&=\!\cosh{\left(\theta\!\left(t-t_0\right)\!\right)}\left(\!\!-d_1\!\frac{\theta^2}{\theta^2+\omega_0^2}\!\sin\!\left(\omega_0t_0\right)\!-\!\frac{\beta}{a}\right)\nonumber\\
    &+\frac{\sinh{\left(\theta\left(t-t_0\right)\right)}}{ \sinh(\theta T_{\Delta})}\left(\frac{\beta}{a}(\cosh(\theta T_{\Delta})-1)\right.\nonumber\\
    &\left.+d_1\frac{\theta^2}{\theta^2+\omega_0^2}\left(\sin(\omega_0 t_0)\cosh(\theta T_{\Delta})-\sin(\omega_0 t_f)\right)\right)\nonumber\\
                   &-d_1\frac{\omega_0^2}{\theta^2+\omega_0^2}\sin (\omega_0t);\label{eqn:finsolution2}
\end{align}
where $T_{\Delta}:=t_f-t_0$ is the total operational time.
\end{subequations}

After taking the limits $t_f\rightarrow\infty$ and $t_0\rightarrow-\infty$, the non-periodical terms in \eqref{eqn:finsolution} that rely on the initial and final conditions vanish. Indeed, we have:
\begin{subequations}
\label{eqn:infsolution}
\begin{align}
    \lim_{\substack{t_0\rightarrow-\infty\\t_f\rightarrow\infty}}   e_s(t) &= d_1\frac{\omega_0}{\theta^2+\omega_0^2}\cos (\omega_0t);\\
    \lim_{\substack{t_0\rightarrow-\infty\\t_f\rightarrow\infty}}   u(t) &= -d_1\frac{\omega_0^2}{\theta^2+\omega_0^2}\sin (\omega_0t);\\
    \lim_{\substack{t_0\rightarrow-\infty\\t_f\rightarrow\infty}}   \lambda(t) &=-\alpha d_1\frac{\theta^2}{\theta^2+\omega_0^2}\sin (\omega_0t)-\beta.
\end{align}
\end{subequations}

Finally, we observe that one can show the expression $\frac{1}{T_\Delta}\lambda(t)(\hat{e}(t)-e_s(t))$ converges to 0 as $t_0\rightarrow-\infty$ and $t_f \rightarrow\infty$ for all $\hat{e}(t)$ that have sub-linear growth in $T_\Delta$.

 By \cite{CARTIGNY20031007}, it is sufficiently to state the limiting policy \eqref{eqn:infsolution} is an optimal solution of \eqref{eqn:appxproblem}.
\end{proof}

\begin{corollary}
Given the operational problem \eqref{eqn:appxproblem}, the optimal average generation cost $\mathcal{L}_s(e)$ and average storage cost  $\mathcal{L}_s(e)$ are expressed in \eqref{eqn:jlg} and \eqref{eqn:jls}, respectively.
\end{corollary}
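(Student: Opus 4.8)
The plan is to obtain both expressions by direct substitution of the optimal policy \eqref{eqn:tempsolution} from Theorem~\ref{the:infinite} into the respective cost functionals and then evaluating the long-time average. Since the optimal trajectories are purely sinusoidal with period $\frac{2\pi}{\omega_0}$, the infinite-horizon average in \eqref{eqn:appxproblem} reduces to an average over a single period, so the limit exists and can be computed explicitly using the elementary orthogonality relations for sinusoids.

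For the average generation cost, I would first substitute $u(t)=-u_1(\gamma)\sin(\omega_0 t)$ together with the demand \eqref{eqn:d} into the balance \eqref{eq:supp-dem-balance} to obtain $p(t)=d_0+(d_1-u_1(\gamma))\sin(\omega_0 t)$. Inserting this into \eqref{eqn:lg} and dividing by $t_f-t_0$, the cross term $2d_0(d_1-u_1)\sin(\omega_0 t)$ and the linear term $b(d_1-u_1)\sin(\omega_0 t)$ average to zero, while the mean of $\sin^2(\omega_0 t)$ contributes the factor $\tfrac12$. This leaves exactly
\begin{equation*}
    J_g(\gamma)=\frac{a}{2}d_0^2+\frac{a}{4}(d_1-u_1(\gamma))^2+b d_0,
\end{equation*}
which is \eqref{eqn:jlg}.

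For the average storage cost, the key observation is that the energy trajectory $e(t)=e_0+e_1(\gamma)\cos(\omega_0 t)$ is periodic with period $\frac{2\pi}{\omega_0}$ and executes exactly one full charge--discharge cycle per period, oscillating between $e_0-e_1(\gamma)$ and $e_0+e_1(\gamma)$. Hence every cycle has identical Depth of Discharge $\mathcal{D}_i(e)=2e_1(\gamma)$, as recorded in \eqref{eqn:mde}, and over a horizon $[t_0,t_f]$ the number of completed cycles is $\frac{\omega_0}{2\pi}(t_f-t_0)+O(1)$. Since each cycle contributes $\Phi\!\left(2e_1(\gamma)/C\right)C\rho$ to the sum \eqref{eqn:lse}, dividing by $t_f-t_0$ and taking the limit yields
\begin{equation*}
    J_s(\gamma,C)=\Phi\!\left(2e_1(\gamma)/C\right)C\rho\,\frac{\omega_0}{2\pi},
\end{equation*}
which is \eqref{eqn:jls}.

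The only nontrivial point lies in the storage term: one must justify that the cycle-counting functional $\mathcal{D}_i$ assigns to this sinusoidal trajectory exactly one cycle per period of depth $2e_1(\gamma)$, rather than double-counting the ascending and descending half-cycles, and that the $O(1)$ contribution from partial cycles at the two endpoints is annihilated after division by $t_f-t_0$. For a pure sinusoid the cycle decomposition is unambiguous, so this identification is immediate; the generation term requires nothing beyond the averaging of sinusoids and presents no obstacle.
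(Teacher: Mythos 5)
Your proposal is correct and follows essentially the same route as the paper: substitute the sinusoidal optimal policy from Theorem~\ref{the:infinite} into $\mathcal{L}_g$ and $\mathcal{L}_s$, average the sinusoids over a period for the generation term, and count one cycle of depth $2e_1(\gamma)$ per period $\frac{2\pi}{\omega_0}$ for the storage term. Your version is merely more explicit about the orthogonality computation and the $O(1)$ boundary cycles, which the paper leaves implicit.
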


\begin{proof}
We find the average generation cost by substituting \eqref{eqn:finsolution2} into the generation cost model $\mathcal{L}_g(p)$ defined in \eqref{eqn:lg} and then taking the average, i.e.
\begin{align}
    J_g(\gamma) &= \lim_{\substack{t_0\rightarrow-\infty\\t_f\rightarrow\infty}}\frac{1}{t_f-t_0}\mathcal{L}_g(p,{t_0},{t_f})\nonumber\\
                  &=\frac{a}{4}(d_1-u_1(\gamma))^2+\frac{a}{2}d_0+b d_0.
\end{align}

Moreover, since we found the explicit expressions of $\mathcal{D}(e)$ and $r(\gamma)$ for every cycle as functions of $\gamma$, we can express the long term average storage cost by the original storage model $\mathcal{L}_s(e)$ that we defined in \eqref{eqn:lse}, i.e.

\begin{align}
    J_{s}(\gamma,C) &= \lim_{\substack{t_0\rightarrow-\infty\\t_f\rightarrow\infty}}\frac{1}{t_f - t_0}\sum_{i=1}^\infty \Phi(2 e_1(\gamma)/C) C \rho\nonumber\\
    &=\Phi(2 e_1(\gamma)/C) C \rho \frac{\omega_0}{2\pi}.\label{eqn:js}
\end{align}
Thus, result follows.
\end{proof}
We can further use Theorem 1 to explicitly compute the life span of the storage under the control policy of \eqref{eqn:tempsolution}. 

\begin{corollary}\label{co:life-span}
Given the operational problem \eqref{eqn:appxproblem}, the life span of a storage with capacity $C$ under the optimal control $u(t,\gamma)$ in \emph{Theorem 1} is
\begin{align}
    T^{u(t,\gamma)}_{ls}(C)=\Phi(2e_1(\gamma)/C)^{-1}\frac{2\pi}{\omega_0}.
\end{align}
\end{corollary}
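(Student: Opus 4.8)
The plan is to exploit the fact, established in \emph{Theorem 1}, that under the optimal control $u(t,\gamma)$ the stored-energy trajectory is a pure sinusoid of period $\frac{2\pi}{\omega_0}$, so that every charge/discharge cycle is identical. The argument is then a simple accounting of cumulative degradation, and I do not anticipate a serious technical obstacle; the only point requiring care is to state explicitly the convention that defines the end of life.

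First, I would recall from \eqref{eqn:tempsolution} and \eqref{eqn:mde} that each cycle has the same Depth of Discharge $D(\gamma) = 2e_1(\gamma)$, hence the same normalized depth $2e_1(\gamma)/C$. By the definition of the cycle-depth stress function $\Phi$, each such cycle consumes a fraction $\Phi(2e_1(\gamma)/C)$ of the battery's total life. This is precisely the quantity appearing in the per-cycle summand of the storage cost \eqref{eqn:lse}, where the one-time building cost $\rho$ is charged in proportion to the fractional loss of life $\Phi(\cdot)$.

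Second, I would invoke the convention that the battery reaches its end of life once the accumulated fractional loss of life equals $1$. Since all cycles are identical, after $N$ cycles the cumulative loss is exactly $N\,\Phi(2e_1(\gamma)/C)$; setting this equal to $1$ yields the number of cycles to end of life, $N = \Phi(2e_1(\gamma)/C)^{-1}$.

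Finally, I would convert cycles into elapsed time. Because each cycle occupies one period $\frac{2\pi}{\omega_0}$ of the sinusoidal trajectory, the life span is $T^{u(t,\gamma)}_{ls}(C) = N\cdot\frac{2\pi}{\omega_0} = \Phi(2e_1(\gamma)/C)^{-1}\frac{2\pi}{\omega_0}$, as claimed. The main thing to verify is internal consistency with the cost model: the same per-cycle degradation $\Phi(2e_1(\gamma)/C)$ that drives the average storage cost \eqref{eqn:jls} must be the one that accumulates toward end of life, which it is by construction.
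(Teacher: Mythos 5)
Your argument is correct and is exactly the reasoning the paper intends: the paper actually states this corollary without proof, and your cycle-counting derivation (identical cycles of normalized depth $2e_1(\gamma)/C$, end of life when the accumulated loss $N\,\Phi(2e_1(\gamma)/C)$ reaches $1$, one cycle per period $\frac{2\pi}{\omega_0}$) is the natural argument that also makes the result consistent with the average storage cost \eqref{eqn:jls}. Nothing is missing.
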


\section{Planning Problem}
\label{sec:planning}

We now leverage Theorem \ref{the:infinite} and Corollary \ref{co:life-span} to solve our operational problem \eqref{eqn:planningproblem_r}. 
To do this we first require a model on the storage degradation.
Thus, we further assume the following storage degradation cost function experimentally derived in \cite{Xu2018}.
\begin{assumption}\label{as1}
For a battery with monotonically increasing cycle depth stress function $\Phi (y)$, we assume $\Phi(y)y^{-1}$ is strongly convex, which holds if and only if
\begin{align}
    (\Phi(y)y^{-1})''>0.
\end{align}
\end{assumption}
\begin{lemma}
\label{le:dagradation}
Given the planning problem \eqref{eqn:ppa2}-\eqref{eqn:ppf2} with degradation model in \emph{Assumption \ref{as1}}, we can express the optimal cycle depth $y^*$ and penalty parameter $\gamma^*$ as follows

\begin{subequations}
\begin{align}
    y^* &=[y_s]^{\min{(y_{ub},1)}}_{y_{lb}}\\
\gamma^* &=\left\{
    \begin{aligned}
         \infty, &\qquad\text{if } \gamma_s<0,\\
         \gamma_s,&\qquad\text{o.w.}.
    \end{aligned}\right.\hfill
\end{align}
where $$\gamma_s := {2\omega_0^2a\rho}/\left({{d_1}\pi a \Phi(y^*)^{-1}y^*-2\rho}\right)$$ and $y_s$ solves $(\Phi(y)y^{-1})'=0$. $(\gamma_s, y_s)$ is the stationary point. And $y^*$ is the projection of $y_s$ onto the interval of $[y_{lb}, \min (y_{ub}, 1)]$. In addition, $y_{lb}:= \Phi^{-1}((T_{ls,\max}\frac{\omega_0}{2\pi})^{-1})\in(0,1]$ and $y_{ub} := \frac{2}{\epsilon\omega_0}\ge y_{lb}$ are lower and upper bounds on $y$ introduced by the maximum life span constraint \eqref{eqn:ppf2} and maximum charging/discharging rate constraint \eqref{eqn:rpe}, respectively.
\end{subequations}
\end{lemma}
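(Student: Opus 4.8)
The plan is to reduce the two-variable problem over $(\gamma,C)$ to two decoupled scalar optimizations by changing the capacity variable to the normalized cycle depth $y=D(\gamma)/C=2e_1(\gamma)/C$. First I would substitute $C=2e_1(\gamma)/y$ into the objective \eqref{eqn:oper-cost-gamma}: using \eqref{eqn:jls} the storage cost becomes $J_s=\frac{\omega_0\rho}{\pi}\,e_1(\gamma)\,\Phi(y)y^{-1}$, so the total cost takes the product form
\begin{equation*}
J(\gamma,y)=J_g(\gamma)+\frac{\omega_0\rho}{\pi}\,e_1(\gamma)\,\frac{\Phi(y)}{y}.
\end{equation*}
In parallel I would rewrite every constraint of \eqref{eqn:planningproblem_r} purely in terms of $y$: the capacity constraint \eqref{eqn:rpc} becomes $y\le 1$; using $u_1(\gamma)=\omega_0 e_1(\gamma)$ from Theorem \ref{the:infinite} together with $e_1=yC/2$, the rate constraint \eqref{eqn:rpe} becomes $y\le 2/(\epsilon\omega_0)=y_{ub}$; and invoking Corollary \ref{co:life-span} with the monotonicity of $\Phi$, the life-span constraint \eqref{eqn:ppf2} becomes $\Phi(y)\ge(T_{ls,\max}\omega_0/2\pi)^{-1}$, i.e. $y\ge y_{lb}$. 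Thus all three constraints collapse to the single box $y\in[y_{lb},\min(y_{ub},1)]$, which is independent of $\gamma$.

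The decisive observation is that this box does not involve $\gamma$, and that $y$ enters the objective only through the positive multiplicative factor $\frac{\omega_0\rho}{\pi}e_1(\gamma)\,\Phi(y)y^{-1}$. Hence for every fixed $\gamma$ the inner minimization $\min_y J(\gamma,y)$ reduces to minimizing $\Phi(y)y^{-1}$ over the fixed interval, and its minimizer is independent of $\gamma$; the optimization therefore separates cleanly. I would solve $\min_y\Phi(y)y^{-1}$ once: by Assumption \ref{as1}, $\Phi(y)y^{-1}$ is strongly convex, so its unconstrained stationary point $y_s$, solving $(\Phi(y)y^{-1})'=0$, is the unique global minimizer, and the constrained minimizer over the box is its projection $y^*=[y_s]_{y_{lb}}^{\min(y_{ub},1)}$. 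This establishes the claimed $y^*$ and lets me replace the inner problem by the constant $\Phi(y^*)/y^*$.

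With $y^*$ fixed, I would then minimize the univariate function $J_g(\gamma)+K\,e_1(\gamma)$ over $\gamma\ge0$, where $K:=\frac{\omega_0\rho}{\pi}\Phi(y^*)/y^*>0$. Substituting $x=\gamma/a=\theta^2$ and using \eqref{eq:e1}, both $J_g$ and $e_1$ become rational functions of $x$; setting the derivative to zero and clearing the common positive factor $\frac{d_1\omega_0}{(x+\omega_0^2)^2}$ reduces the condition to the linear equation $\frac{a d_1\omega_0 x}{2(x+\omega_0^2)}=K$, whose root translates back to $\gamma_s$. The main obstacle, and really the only delicate point, is the sign analysis at this last step: the bracketed factor $\frac{a d_1\omega_0 x}{2(x+\omega_0^2)}-K$ is increasing in $x$ with supremum $\frac{a d_1\omega_0}{2}-K$, so when this supremum is nonpositive the objective is monotonically decreasing and its infimum is attained only as $\gamma\to\infty$, giving $\gamma^*=\infty$; otherwise the unique interior root $\gamma_s$ is the minimizer. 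I would finish by verifying that this dichotomy is governed exactly by the sign of the denominator $d_1\pi a\Phi(y^*)^{-1}y^*-2\rho$ (equivalently the sign of $\gamma_s$), and that the nonnegativity bound on $\gamma$ is inactive in the interior case, which yields the stated characterization of $\gamma^*$.
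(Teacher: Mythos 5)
Your proposal is correct, and it takes a genuinely different route from the paper after the common first step. Both arguments begin identically: change variables from $C$ to $y=2e_1(\gamma)/C$, write the objective as $J_g(\gamma)+\tfrac{\omega_0\rho}{\pi}e_1(\gamma)\Phi(y)y^{-1}$, and collapse constraints \eqref{eqn:rpc}, \eqref{eqn:rpe}, \eqref{eqn:ppf2} into the box $y\in[y_{lb},\min(y_{ub},1)]$ together with $\gamma\ge0$. From there the paper solves the joint KKT system, asserts quasi-convexity of $\tilde J(\gamma,y)$ via the sign of the leading principal minor and determinant of a bordered Hessian (stated without computation), and then invokes the Arrow--Enthoven sufficiency conditions, splitting into cases according to whether constraints bind. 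You instead observe that the feasible set is a product set and that $y$ enters only through the positive multiplicative factor $e_1(\gamma)\Phi(y)y^{-1}$, so the problem decouples exactly: the inner minimization over $y$ is $\gamma$-independent and, by the convexity in Assumption~\ref{as1}, is solved by projecting $y_s$ onto the box; the outer problem in $\gamma$ is then a scalar unimodal minimization whose derivative you factor explicitly, recovering $\gamma_s$ and the dichotomy $\gamma^*=\infty$ versus $\gamma^*=\gamma_s$ from the sign of the denominator $d_1\pi a\Phi(y^*)^{-1}y^*-2\rho$. Your route is more elementary and in some respects tighter: it avoids the unverified bordered-Hessian claim, does not need the quasi-convexity/KKT machinery at all, and handles the degenerate case where the infimum over $\gamma$ is attained only in the limit (which a pure KKT argument does not capture cleanly). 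The one point worth stating explicitly if you write this up is that $e_1(\gamma)>0$ for all feasible $\gamma$, which is what licenses discarding it when minimizing over $y$, and that in the limiting case $\gamma\to\infty$ the storage term vanishes so the stated $y^*$ remains (weakly) optimal there as well.
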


\begin{proof} 
Recalling $y=D/C= 2 e_1(\gamma)/C$ is the normalized DoD, we can rewrite \eqref{eqn:jls} such that the cost function \eqref{eqn:ppa2} is
\begin{align}
\tilde J(\gamma,y) &\!=\! J(\gamma,C)
\!=\!J_g(\gamma)+2e_1(\gamma) \Phi(y)  y^{-1} \rho \frac{\omega_0}{2\pi},
\end{align}
with constraints on $C$ (19b) to (19d) and on $\gamma$ becoming:
\begin{subequations}
\label{eqn:c}
\begin{align}
        \max(0,\Phi^{-1}((T_{ls,\max}\frac{\omega_0}{2\pi})^{-1})]&\le y\le \min[1,\frac{2}{\epsilon\omega_0}],\label{eqn:c0}\\
        \gamma&\ge0.
\end{align}
\end{subequations}
Clearly, for a solution to the problem to exist, we require the above equations to have a non-empty intersection, i.e.,
\begin{align}
     \Phi^{-1}((T_{ls,\max}\frac{\omega_0}{2\pi})^{-1})&\le\frac{2}{\epsilon\omega_0}. \label{eqn:c2b}
\end{align}

Under this assumption, we can solve this optimization problem with respect to $y$ and $\gamma$. 
Solving KKT conditions gives unique KKT point $(\gamma^*, y^*)$ expressed in Lemma 1. We then claim that $(\gamma^*, y^*)$ is the optimal cycle depth and penalty parameter for all $y \in (0,1]$ and $\gamma\in[0,\infty)$. Indeed, we calculate the bordered Hessian:
\begin{align}
    \tilde H^{b} = \begin{bmatrix}
         0 & \frac{\partial}{\partial \gamma}\tilde J&\frac{\partial}{\partial y}\tilde J\\
         \frac{\partial}{\partial \gamma}\tilde J & \frac{\partial^2}{\partial \gamma \partial \gamma}\tilde J&\frac{\partial^2}{\partial \gamma\partial y}\tilde J\\
         \frac{\partial}{\partial y}\tilde J & \frac{\partial^2}{\partial \gamma\partial y}\tilde J&\frac{\partial^2}{\partial y\partial y}\tilde J
         \end{bmatrix}.
\end{align}

Note that both the first order leading principal minor and determinate of $\tilde H^{b}$ are strictly negative under \emph{Assumption \ref{as1}}, which is sufficient to show $\tilde J(\gamma, y)$ is quasi-convex\cite{Kenneth1961}. 

When the constraints \eqref{eqn:c} are not binding, i.e. $(\gamma^*, y^*) = (\gamma_s, y_s)$, we have the Hessian of Lagrangian evaluated at $(\gamma^*, y^*)$ Positive-definite. Thus the $(\gamma^*, y^*)$ is the strict local minima point. 

Also, for the case that the KKT point binding any constraint, we have $\nabla\tilde J(\gamma^*, y^*)\neq0$ and  $\tilde J(\gamma^*, y^*)$ twice differentiable in the neighborhood of $(\gamma^*, y^*)$.

In both cases, the unique KKT point globally minimize this quasi-convex objective function\cite{Kenneth1961}. Then \emph{Lemma \ref{le:dagradation}} follows.    
\end{proof}

To understand the implications of Lemma \ref{le:dagradation}, it is useful to consider the explicit dependence of $\tilde J(\gamma,y)$ with respect to $u_1 = \omega_0 e_1$ and $p_1:=d_1-u_1$ using equations:
\begin{subequations}
\begin{align}
\tilde J(p_1,u_1)&=\tilde J_{s}(u_1)+\tilde J_{g}(p_1);\\
\tilde J_{s}(u_1) &= u_1 \Phi(y)  y^{-1} \rho \frac{1}{\pi};\\
\tilde J_{g}(p_1) &= \frac{a}{4}p_1^2+\frac{a}{2}d_0+b d_0.
\end{align}
\end{subequations}

By taking the derivative of the storage cost $\tilde J_{s}(u_1)$ and the generation cost $\tilde J_{g}(p_1)$, we can get marginal cost for both cases:
\begin{subequations}
\begin{align}
    \frac{\partial}{\partial u_1} \tilde J_{s}(u_1) =& \Phi(y)y^{-1}\rho\frac{1}{\pi};\\
    \frac{\partial}{\partial p_1} \tilde J_{g}(p_1) =&\frac{\alpha}{2}p_1.
\end{align}
\end{subequations}

With these marginal costs, one can observe that it is not optimal to use any storage under the following condition, 
\begin{align}
    \frac{\partial}{\partial u_1} \tilde J_{s}(u_1)>\frac{\partial}{\partial p_1}\tilde J_{g}(p_1),\quad\forall u_1+p_1 = d_1.\label{eqn:nostoragecondition}
\end{align}

More precisely, $p_1 = d_1$ and $u_1 = 0$ solve the following problem when \eqref{eqn:nostoragecondition} holds.
\begin{subequations}
\begin{align}
    \min_{p_1,u_1} \quad &\tilde J(p_1,u_1)\\    
    \text{s.t.} \quad&d_1 = p_1+u_1.
\end{align}
\end{subequations}
\begin{remark}
When $\gamma_s<0$, Lemma \ref{le:dagradation} states that $\gamma^*=\infty$, which implies that the optimal amount of storage is \emph{zero}. This can occur, in particular, when ${d_1}\pi a \Phi(y^*)^{-1}y^*<2\rho$.
Note that this inequality can be written as 
\begin{align}
    \frac{\partial}{\partial u_1} \tilde J_{s}(u_1)>\frac{\partial}{\partial p_1}\tilde J_{g}(p_1)\Big\vert_{p_1 = d_1},
\end{align}
which implies condition \eqref{eqn:nostoragecondition}.

\end{remark}

\begin{theorem}
\label{the:optimal}
Given the planning problem \eqref{eqn:ppa2}-\eqref{eqn:ppf2} with degradation model in \emph{Assumption \ref{as1}}, the optimal capacity is
\begin{align}
    C^* &=2e_1(\gamma^*)/y^*,\hfill
\end{align}
where $(\gamma^*,y^*)$ are defined in Lemma \ref{le:dagradation} and $e_1(\gamma)$ is defined in \eqref{eq:e1}.
\end{theorem}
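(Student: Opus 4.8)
The plan is to exploit the change of variables already carried out inside the proof of Lemma~\ref{le:dagradation}. The reformulated planning problem~\eqref{eqn:planningproblem_r} is posed over the pair $(\gamma,C)$, yet both the objective and all constraints depend on $C$ solely through the normalized depth of discharge $y = D(\gamma)/C = 2e_1(\gamma)/C$. For each fixed $\gamma\ge 0$, since $e_1(\gamma)>0$ by \eqref{eq:e1}, the map $C\mapsto y = 2e_1(\gamma)/C$ is a strictly decreasing bijection from $(0,\infty)$ onto $(0,\infty)$, with inverse $y\mapsto C = 2e_1(\gamma)/y$. Consequently, minimizing $J(\gamma,C)$ over $(\gamma,C)$ is equivalent to minimizing the reparametrized objective $\tilde J(\gamma,y)$ over $(\gamma,y)$, and a direct check shows the box constraints \eqref{eqn:c} on $y$ are exactly the images of the original constraints \eqref{eqn:rpc}--\eqref{eqn:ppf2}: inequality \eqref{eqn:rpc} becomes $y\le 1$, \eqref{eqn:rpe} becomes $y\le y_{ub}=2/(\epsilon\omega_0)$ via $u_1(\gamma)=\omega_0 e_1(\gamma)$, and \eqref{eqn:ppf2} becomes $y\ge y_{lb}$ using Corollary~\ref{co:life-span} together with the monotonicity of $\Phi$.

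First I would confirm that the optimizer $(\gamma^*,y^*)$ supplied by Lemma~\ref{le:dagradation} solves this reparametrized problem; this is immediate, since Lemma~\ref{le:dagradation} is precisely the statement that $(\gamma^*,y^*)$ globally minimizes the quasi-convex objective $\tilde J$ over the admissible region. I would then invert the change of variables to recover the optimal capacity. Because the objective value is preserved under the bijection and the admissible regions correspond, the minimizer of the original problem is obtained by setting $C^* = 2e_1(\gamma^*)/y^*$, which is the claimed expression. This is the entire content of the theorem once Lemma~\ref{le:dagradation} is in hand, so the argument is short.

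The only subtlety, and the step I would treat most carefully, is the degenerate case $\gamma^*=\infty$ returned by Lemma~\ref{le:dagradation} when $\gamma_s<0$. Here the formula $C^* = 2e_1(\gamma^*)/y^*$ must be read as a limit. From \eqref{eq:e1}, $e_1(\gamma)=d_1\omega_0/(\theta^2+\omega_0^2)$ with $\theta=\sqrt{\gamma/a}$, so $e_1(\gamma)\to 0$ as $\gamma\to\infty$; since $y^*$ stays bounded away from zero, the limit yields $C^*=0$. This is consistent with the interpretation in the preceding Remark that $\gamma^*=\infty$ corresponds to deploying zero storage, and the argument closes by noting that, as $\gamma\to\infty$, $u_1(\gamma)\to 0$, so the attained cost reduces to the generation-only term, which equals the no-storage cost $J(0)$.
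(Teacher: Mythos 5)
Your argument is correct, but it takes a genuinely different route from the paper. The paper's proof of Theorem~\ref{the:optimal} works directly in the $(\gamma,C)$ coordinates: it asserts that $(\gamma^*,C^*)$ satisfies the KKT conditions of \eqref{eqn:ppa2}--\eqref{eqn:ppf2}, that $J(\gamma,C)$ is itself quasi-convex (via a bordered-Hessian check analogous to the one carried out for $\tilde J(\gamma,y)$ in Lemma~\ref{le:dagradation}), and then invokes the Arrow--Enthoven sufficiency result a second time to conclude global optimality. You instead observe that the map $(\gamma,C)\mapsto(\gamma,y)$ with $y=2e_1(\gamma)/C$ is, for each finite $\gamma$ with $e_1(\gamma)>0$, a bijection carrying the feasible set of \eqref{eqn:planningproblem_r} onto the feasible set of the problem solved in Lemma~\ref{le:dagradation} while preserving objective values, so global optimality transfers back through the inverse map and no second optimality certificate is needed. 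This is cleaner and arguably more robust: global optimality is preserved under any bijective reparametrization of the feasible set, whereas quasi-convexity is not automatically preserved under a nonlinear change of variables, so the paper's approach genuinely requires the extra verification it performs. Two small points to tighten: (i) the objective does not depend on $C$ \emph{only} through $y$ --- $J_s(\gamma,C)=\Phi(2e_1(\gamma)/C)\,C\,\rho\,\omega_0/(2\pi)$ carries an explicit factor of $C$ --- but since $C=2e_1(\gamma)/y$ the substitution still expresses everything as a function of $(\gamma,y)$, which is all the bijection argument needs; (ii) your separate treatment of the degenerate case $\gamma^*=\infty$, where $e_1(\gamma)\to 0$, the bijection collapses, and the formula must be read as the limit $C^*=0$, is a necessary caveat that the paper's own proof does not address explicitly, so including it is a genuine improvement rather than a digression.
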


\begin{proof}
For optimization problem \eqref{eqn:ppa2}-\eqref{eqn:ppf2}, $(\gamma^*, C^*)$ satisfy all the KKT conditions and $J(\gamma,C)$ is also quasi-convex. Similarly, when the constraints (19b) to (19d) and $\gamma>0$ are not binding, we have the Hessian of Lagrangian evaluated at $(\gamma^*, C^*)$ positive-definite. And for the case that the KKT point binding any constraint, we have $\nabla J(\gamma^*, C^*)\neq0$ and  $J(\gamma^*, C^*)$ twice differentiable in the neighborhood of $(\gamma^*, C^*)$. 

Therefore, by \cite{Kenneth1961} again, $(\gamma^*, C^*)$ is the global minima point.
\end{proof}

\section{Numerical Results}
\label{sec:numerical}
In this section, we use our analysis to shed light on the potential savings that can be achieved, by  using real-world demand data from ISO New England \cite{iso} on (data: 07/17/2019). Based on the mentioned data, we choose $d_0$ and $d_1$ in \eqref{eqn:d} by selecting the DC and first harmonic of a one-day data. Thus,  $d_0 = 18091\, \mbox{MW}$,  $d_1 = 4671\, \mbox{MW}$, and $\omega_0  = 0.26 \, rad/hr$, which corresponds to one cycle per day. Therefore the demand variation $d_1\cos(\omega_0t)$ in \eqref{eqn:d} represents the demand fluctuations for the main daily cycle.

The storage degradation function in our numerical test is:

\begin{align}
    \Phi(y)=(k_1(y)^{k_2}+k_3)^{-1},\label{eqn:a1}
\end{align}
where $k_1 = 1.4\times10^5$, $k_2 = -0.5$, $k_3 = -1.23\times10^5$ and the largest battery life spans $T_{max} = 76$ years according to the LMO battery degradation test \cite{Xu2018}. Note that this function satisfy our general assumptions in Assumption \ref{as1}.

The generation cost coefficients are selected to be $a = 0.02$ and $b = 16.24$ base on the energy system model \cite{Adefarati2013}. And the capacity-power ratio is set as $\epsilon = 2$ by \cite{tesla}. At last, by \cite{Xu2018}, the one-time unit building cost $\rho = 209000\mbox{USD/MWh}$.

\begin{figure}[t!]
    \centering
    \begin{subfigure}[b]{0.45\textwidth}
        \includegraphics[width=\textwidth]{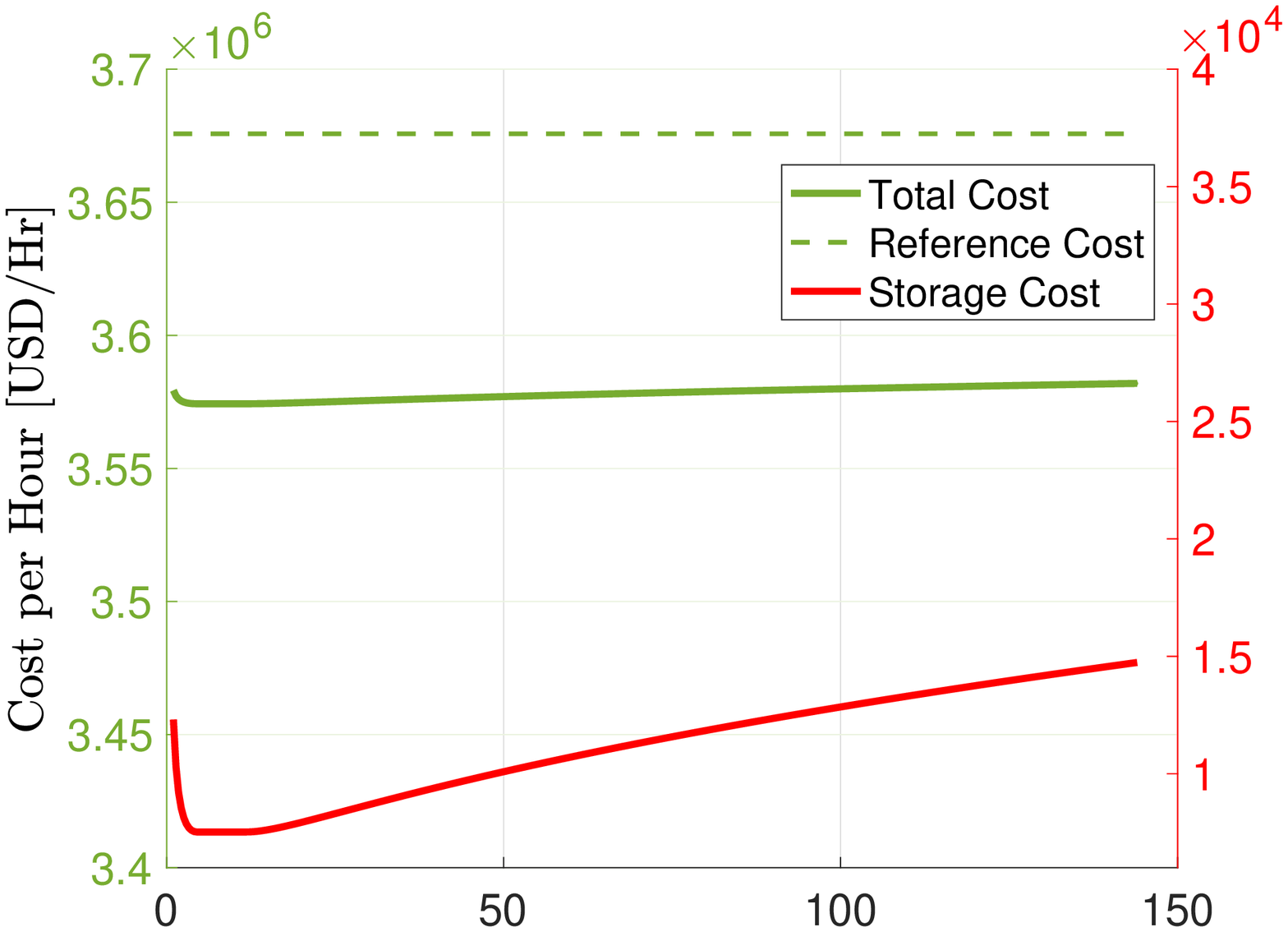}
    \end{subfigure}
    \quad
    \begin{subfigure}[b]{0.45\textwidth}
        \includegraphics[width=\textwidth]{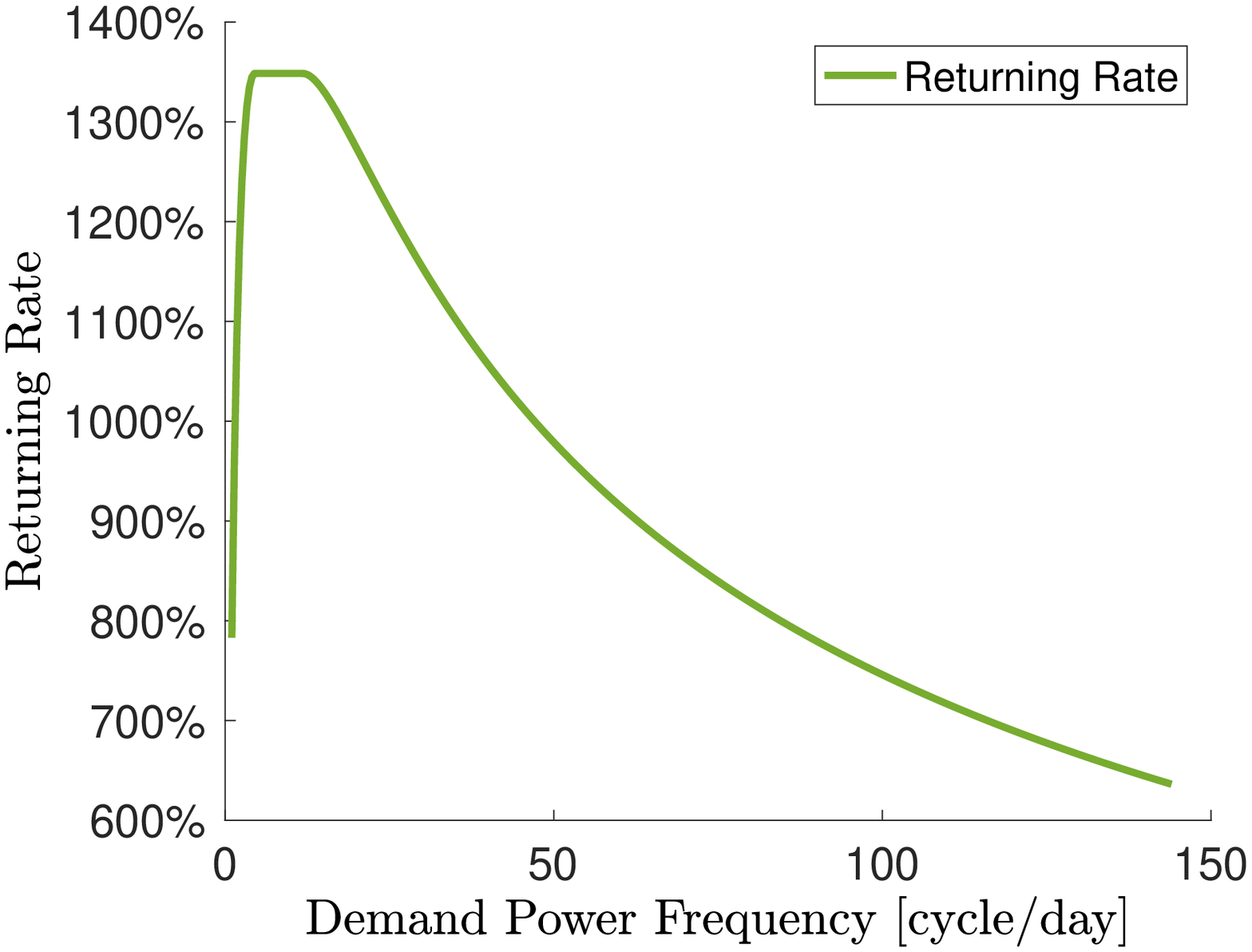}
    \end{subfigure}
    \caption{Cost composition (upper) and returning rate (lower). versus demand power frequency}
    \label{fig:w02}
\end{figure}

We study the impact of varying the demand frequency $\omega_0$ on the cost performance by varying $\omega_0$ from 1 cycle per day ($\omega_0 = 0.26 \, rad/hr$) to 144 cycle per day ($\omega_0 = 37.66 \, rad/hr$). Note that the non-empty intersection assumption \eqref{eqn:c2b} is satisfied for the given frequency range. 
The cost composition are plotted in the upper panel of Fig ~\ref{fig:w02} in where the reference cost is the cost without using any storage. Moreover, we calculate the ratio between saving and investment on storage as returning rate  and plot it in the lower panel of Fig ~\ref{fig:w02}.

Simulation results indicate cost saving between $2.77\% - 2.56\%$ can be achieved. Moreover, the storage is not used for both extremely low and extremely high frequencies (beyond the range of the plot). This is intuitive, since the storage operation will only incur extra cost when the demand is flat. Similarly, if the demand variation are extremely frequent, the storage runs out of degrades quickly and the building cost increases. Our results also show that larger generation cost coefficient $a$ leads to higher savings.
\section{Conclusion}
\label{sec:Conclusion}
In this work, we propose an optimization framework that aims at estimating the operational  cost benefits  of  using storage in an energy system as well as the optimal storage amount that should be deployed into the system.  Analytical closed form solutions to this framework bring insight on the effect of the frequency of demand fluctuations and capacity on the overall system cost. Numerical examples are provided to illustrate our framework. Analysing the effect of charging/discharging inefficiency on this framework is an ongoing topic of research.

\balance

\bibliographystyle{IEEEtran}
\bibliography{main.bib}

\end{document}